\newcommand{\defeq}{\vcentcolon=}
\def\identity{\leavevmode\hbox{\small1\kern-3.8pt\normalsize1}}
\newcommand{\half}{\mbox{$\textstyle \frac{1}{2}$}}
\renewcommand{\epsilon}{\varepsilon}
\newtheorem{definition}{Definition} 
\newtheorem{definition2}{Definition}
\newtheorem{thm}[definition]{Theorem}
\newtheorem{corollary}[definition]{Corollary}
\newtheorem*{rep@theorem}{\rep@title}
\newcommand{\newreptheorem}[2]{%
\newenvironment{rep#1}[1]{%
 \def\rep@title{#2 \ref{##1} (restatement)}%
 \begin{rep@theorem}}%
 {\end{rep@theorem}}}
\def\ba#1\ea{\begin{align}#1\end{align}}
\def\ban#1\ean{\begin{align*}#1\end{align*}}
\newcommand{\be}{\begin{equation}}
\newcommand{\ee}{\end{equation}}
\def\benum{\begin{enumerate}}
\def\eenum{\end{enumerate}}
\def\squareforqed{\hbox{\rlap{$\sqcap$}$\sqcup$}}
\def\qed{\ifmmode\squareforqed\else{\unskip\nobreak\hfil
\penalty50\hskip1em\null\nobreak\hfil\squareforqed
\parfillskip=0pt\finalhyphendemerits=0\endgraf}\fi}
\def\endenv{\ifmmode\;\else{\unskip\nobreak\hfil
\penalty50\hskip1em\null\nobreak\hfil\;
\parfillskip=0pt\finalhyphendemerits=0\endgraf}\fi}
\newcommand{\bra}[1]{\langle #1|}
\newcommand{\ket}[1]{|#1\rangle}
\newcommand{\braket}[2]{\langle #1|#2\rangle}
\newcommand{\<}{\langle}
\renewcommand{\>}{\rangle}
\def\be{\begin{equation}}
\def\ee{\end{equation}}
\def\ben{\begin{eqnarray}}
\def\een{\end{eqnarray}}
\def\bei{\begin{itemize}}
\def\eei{\end{itemize}}
\mathchardef\ordinarycolon\mathcode`\:
\def\vcentcolon{\mathrel{\mathop\ordinarycolon}}
\newcommand{\nc}{\newcommand}
 \nc{\proj}[1]{|#1\rangle\!\langle #1 |} 
\nc{\avg}[1]{\langle#1\rangle}
\nc{\conv}{\operatorname{conv}}
\nc{\smfrac}[2]{\mbox{$\frac{#1}{#2}$}} \nc{\Tr}{\operatorname{Tr}}
\nc{\ox}{\otimes} \nc{\dg}{\dagger} \nc{\dn}{\downarrow}
\nc{\lmax}{\lambda_{\text{max}}}
\nc{\lmin}{\lambda_{\text{min}}}
\nc{\csupp}{{\operatorname{csupp}}}
\nc{\qsupp}{{\operatorname{qsupp}}} \nc{\var}{\operatorname{var}}
\nc{\rar}{\rightarrow} \nc{\lrar}{\longrightarrow}
\nc{\poly}{\operatorname{poly}}
\nc{\polylog}{\operatorname{polylog}} \nc{\Lip}{\operatorname{Lip}}
\nc{\Om}{\Omega}
\nc{\wt}[1]{\widetilde{#1}}
\def\>{\rangle}
\def\<{\langle}
\nc{\glneq}{{\raisebox{0.6ex}{$>$}  \hspace*{-1.8ex} \raisebox{-0.6ex}{$<$}}}
\nc{\gleq}{{\raisebox{0.6ex}{$\geq$}\hspace*{-1.8ex} \raisebox{-0.6ex}{$\leq$}}}
\nc{\vholder}[1]{\rule{0pt}{#1}}
\nc{\wh}[1]{\widehat{#1}}
\nc{\h}[1]{\widehat{#1}}
\nc{\ob}[1]{#1}
\def\beq{\begin {equation}}
\def\eeq{\end {equation}}
\def\be{\begin{equation}}
\def\ee{\end{equation}}
\nc{\eq}[1]{(\ref{eq:#1})} 
\nc{\eqs}[2]{\eq{#1} and \eq{#2}}
\nc{\eqn}[1]{Eq.~(\ref{eqn:#1})}
\nc{\eqns}[2]{Eqs.~(\ref{eqn:#1}) and (\ref{eqn:#2})}
\nc{\region}{\cS\cW}
\newenvironment{protocol*}[1]
  {
    \begin{center}
      \hrulefill\\
      \textbf{#1}
  }
  {
    \vspace{-1\baselineskip}
    \hrulefill
    \end{center}
  }
\begin{document}

\title{Characterising the Performance of XOR Games and the Shannon Capacity of Graphs}
\author{Ravishankar \surname{Ramanathan}}
\email{ravishankar.r.10@gmail.com}
\affiliation{National Quantum Information Center of Gda\'nsk,  81-824 Sopot, Poland}
\affiliation{University of Gda\'nsk, 80-952 Gda\'nsk, Poland}
\author{Alastair Kay}
\affiliation{Department of Mathematics, Royal Holloway University of London, Egham, Surrey, TW20 0EX, UK}
\author{Gl\'aucia Murta}
\affiliation{National Quantum Information Center of Gda\'nsk, 81-824 Sopot, Poland}
\affiliation{Departamento de Fisica, Universidade Federal de Minas Gerais, Caixa Postal 702, 30123-970, Belo Horizonte, MG, Brazil}
\author{Pawe{\l} \surname{Horodecki}}
\affiliation{National Quantum Information Center of Gda\'nsk, 81-824 Sopot, Poland}
\affiliation{Faculty of Applied Physics and Mathematics, Technical University of Gda\'nsk, 80-233 Gda\'nsk, Poland}

\begin{abstract}
In this paper we give a set of necessary and sufficient conditions such that quantum players of a two-party {\sc xor} game cannot perform any better than classical players. With any such game, we associate a graph and examine its zero-error communication capacity. This allows us to specify a broad new class of graphs for which the Shannon capacity can be calculated. The conditions also enable the parametrisation of new families of games which have no quantum advantage, for arbitrary input probability distributions up to certain symmetries. In the future, these might be used in information-theoretic studies on reproducing the set of quantum non-local correlations.
\end{abstract}
\maketitle

{\it Introduction:} A non-local game is one in which several distantly separated players are asked questions by a referee. Upon receipt of their answers, the referee computes whether or not they won, recording their success rate in a variable $\omega$. Although the players are not allowed to communicate, they can pre-share resources which may be consumed during the game.

Studying the contrast in the success when the shared resources are constrained to be quantum or classical is central to quantum information theory. The quintessential example is the Bell test \cite{bell,chsh}, in which there is an advantage to possessing quantum resources, and experiments indeed show this to be the case \cite{aspect}. These Bell tests have led to numerous philosophical and practical developments, such as demonstrating the indeterminism of Nature \cite{Pironio}, and providing the primary technical tool of quantum cryptography \cite{Ekert}. However, it is not only games for which there is a quantum advantage that are of interest; those for which there is no quantum advantage reveal just as much about Nature \cite{Winter2}.

One such class of games are the non-local computation (NLC) games \cite{NLC}, for which there is no quantum advantage for any probability distribution constrained to conform to a particular pattern. This is such a strong statement that it can be taken as an information-theoretic principle similar to those of \cite{IC, CC, LO, ML, Cabello3}: how would a world in which NLC games have no advantage over classical behave, and to what extent is quantum mechanics reproduced? To further such aims, an accurate characterisation of the games for which there is no quantum advantage is essential.

In this paper, we study {\sc xor} games, in which two parties Alice and Bob are asked questions $x,y\in[m]$ respectively, each giving a single bit answer, $a$ or $b$. The referee decides if they've won based on comparing $a\oplus b$ to $f(x,y)$, a deterministic function based on the two questions. We give necessary and sufficient conditions such that the classical and quantum scores are equal, $\omega_c=\omega_q$, subsequently describing broad families of examples.

It has recently been shown that any non-local game can be associated with a graph, $G$, for which some of the graph properties are closely associated with the quantum and classical behaviours of the game \cite{Cabello, Cabello2, Winter, LO, Fritz, Chailloux}:
\begin{eqnarray*}
&m^2\omega_c&=\alpha(G),	\\
\alpha(G)\leq& m^2\omega_q&\leq\theta(G)
\end{eqnarray*}
where $\alpha(G)$ is the independence number of $G$, i.e.\ the maximum number of mutually non-adjacent vertices, and $\theta(G)$ is known as the Lov\'asz theta function \cite{Lovasz-0}.

The zero-error capacity $\Theta(G)$ for sequential uses of a memoryless channel is the maximum rate at which information can be sent through the channel with zero probability of error, and this quantity is traditionally described using the confusability graph $G$ of the channel \cite{Shannon}. The vertices of the graph correspond to the inputs of the channel (letters of the encoding alphabet) and two vertices are connected by an edge if the corresponding inputs can be confused with each other by the receiver after transmission through the channel. The maximum number of one-letter messages which can be sent without confusion is then given by the independence number of the graph $\alpha(G)$. Denoting by $\alpha(G^k)$ the maximum number of $k$-letter messages that can be sent without confusion (two distinct $k$-letter words are confusable if, for every letter in one word, it is either confusable or equal to the corresponding letter in the other word), the Shannon (zero-error) capacity of the graph is given as
$$
\Theta(G) = \sup_{k} \sqrt[k]{\alpha(G^k)}.
$$
In spite of the importance of the Shannon capacity, remarkably few classes of graphs, such as perfect graphs \cite{Berge}, Kneser graphs and vertex-transitive self-complementary graphs \cite{Lovasz-0} and K\"onig-Egerv\'ary graphs \cite{Konig, Egervary, Konig-2, ind-num}, are known for which $\Theta(G)$ has been established analytically. In the majority of cases, these satisfy $\Theta(G) = \alpha(G)$, and are said to be class-1 graphs \cite{Berge}. In general, calculating $\Theta(G)$ is a very difficult problem and its value is not known even for graphs as simple as the seven cycle $C_7$! In his seminal paper \cite{Lovasz-0}, Lov\'asz introduced $\theta(G)$ as an efficiently computable  upper bound to the Shannon capacity: $\alpha(G) \leq \Theta(G) \leq \theta(G)$, using it to show that, for the five-cycle, $\Theta(C_5)=\sqrt{5}$.

Evidently, a simple way to prove the capacity of a graph is to show that $\alpha(G)=\theta(G)$, but we now see that this is only possible if $\omega_c=\omega_q$. With our characterisation of these cases and their corresponding graphs in hand, we are thus motivated to study the value of $\theta(G)$, giving a sufficient condition such that $\alpha(G)=\theta(G)$, thereby yielding a large, novel family of class-1 graphs.

{\em Non-local games and their graphs:} We consider the two-party {\sc xor} games in which both parties receive a question $x,y\in[m]$. Each gives a single bit of output $a,b\in\{0,1\}$, being tasked with winning with the maximum probability, where winning is defined as achieving the value $a\oplus b=f(x,y)$ for some binary function $f(x,y)$. Such a game is defined by the game matrix,
$$
\tilde{\Phi}=\sum_{x,y\in[m]}(-1)^{f(x,y)}P(x,y)\ket{x}\bra{y}.
$$
Of primary interest is the uniform probability distribution, in which the normalization factor is omitted:
$$
\Phi = \sum_{x,y \in [m]}(-1)^{f(x,y)}\ket{x}\bra{y}.
$$

Every {\sc xor} game has an associated a graph $G$ \cite{Chailloux, Winter, Fritz, LO}:
\begin{definition2}
The graph $G$ associated with the {\sc xor} game matrix $\Phi$ consists of $2m^2$ vertices $v\in V$. Each label $v$ can be expressed as $(x,y,a)$ where $x,y\in[m]$ and $a\in\{0,1\}$. Two vertices $v,v'\in V$ form an edge of the graph if ($x=x'$ and $a\neq a'$) or ($y=y'$ and $(-1)^{a\oplus a'} \neq \Phi_{xy}\Phi_{x'y}$).
\end{definition2}
This definition is equivalent to that of \cite{Chailloux}, having reduced the 4 labels $(x,y,a,b)$ used in \cite{Chailloux} via the winning relation $(-1)^{a\oplus b}=\Phi_{xy}$. The adjacency matrix of $G$ is conveniently expressed as
\begin{equation*}
\begin{split}
A(G)&=\identity\otimes(\proj{j}-\identity)\otimes X+\half\proj{j}\otimes\identity\otimes(\identity+X)	\\
&-\half [D(\proj{j}\otimes\identity)D]\otimes(\identity-X)
\end{split}
\end{equation*}
where $X$ is the usual Pauli-$X$ matrix and $\ket{j}$ is the all-ones vector $\ket{j}=\sum_{x\in[m]}\ket{x}$. $D$ is defined as
$$
D=\sum_{x,y\in[m]}\Phi_{x,y}\proj{x,y}.
$$
This graph is $(2m-1)$ regular, triangle free, and has a perfect matching \footnote{An example matching is all pairs of vertices $(x,y,0)$ and $(x,y,1)$.}. Its spectrum, and corresponding degeneracies, is readily found to be
\begin{equation}
\label{adj-spec}
\text{spec}(A(G))=\left\{\begin{array}{cc}
2m-1 & 1	\\
m-1 & 2m-2	\\
-1 & (m-1)^2 \\
1-m\pm\lambda_z & 1 \\
1 & m(m-2)
\end{array}\right.
\end{equation}
where $\lambda_z$ denotes the $m$ singular values of $\Phi$. 

This spectrum was derived by first applying a Hadamard transform to the space of the third label, thereby splitting $A(G)$ into two subspaces $H_+\oplus H_-(D)$. Moreover, $H_+=-H_-(\identity)$. As such, it suffices to diagonalise $H_-(D)$. If $\ket{\lambda^A_z}$ and $\ket{\lambda^B_z}$ are the vectors of the singular value decomposition of $\Phi$ then observe that
\begin{equation}
\bra{\lambda^A_z}\bra{j}D\ket{j}\ket{\lambda^B_{z'}}=\lambda_z\delta_{z,z'},	\label{eqn:neat_trick}
\end{equation}
allowing one to verify that the only non-trivial eigenvectors of $H_-(D)$, with eigenvalues $1-m \mp\lambda_z$, are
$$
\ket{\eta_z^\pm}=\frac{\ket{\lambda^A_z}\ket{j}\pm D\ket{j}\ket{\lambda^B_z}}{\sqrt{2(m\pm\lambda_z)}}.
$$

{\em Games with no quantum advantage:} We are interested in categorising the {\sc xor} games which share the property of no quantum advantage. The optimal quantum strategy proceeds by Alice an Bob measuring $\pm 1$ observables $A_{x}$ and $B_{y}$ on a shared quantum state $\ket{\psi}$ when asked questions $x$ and $y$ respectively. This strategy can be represented in terms of unit vectors in $\mathbb{R}^{m}$ for each measurement of Alice ($\{|u_{x}\rangle \}$) and Bob ($\{|v_{y}\rangle\}$) \cite{Tsirelson, Wehner, Cleve}. The inner product $\langle u_{x} | v_{y} \rangle$ reproduces the expectation value of the measurement. For any such strategy the bias of the quantum value, $\epsilon_{q} \defeq 2 \omega_q - 1$, is given by
$
\epsilon_q = \text{Tr}[\tilde{\Phi}_{s} X]
$
where $X = \bigl(\begin{smallmatrix}A& S\\ S^T& B\end{smallmatrix} \bigr)$ and $\tilde{\Phi}_{s} = \left(\begin{smallmatrix}
0& \frac{1}{2} \tilde{\Phi} \\ \frac{1}{2} \tilde{\Phi}^T& 0
\end{smallmatrix} \right)$. $S$ is the strategy matrix and is defined as an $m \times m$ matrix having entries $S_{x,y} = \langle u_{x} | v_{y} \rangle$. The matrices $A, B$ with $A_{x, y} = \langle u_{x} | u_{y} \rangle$ and $B_{x,y} = \langle v_{x} | v_{y} \rangle$ describe local terms. The optimal quantum value is thus given by an optimisation over the vectors, which may be phrased as a semi-definite program $(P)$ \cite{Wehner, Cleve2}
\begin{eqnarray*}
\epsilon_q = \; \max \; \; &&\text{Tr[$\tilde{\Phi}_{s}$ X]}	 \\
\text{s.t.} \; \; &&\text{diag(X)} = |j \rangle, \; \; X \succeq 0,
\end{eqnarray*}
where $X \succeq 0$ denotes $X$ as positive semi-definite.
For a classical strategy, all vectors $|u_{x} \rangle$ and $|v_{y} \rangle$ are equal to $\pm|w\rangle$ for a single unit vector $|w\rangle$. The classical strategy matrix $S_c$ is thus a matrix with $\pm 1$ entries with all columns (and rows) being proportional to each other.

\begin{thm}
\label{xor-qeqc}
Consider a two-party {\sc xor} game with game matrix $\tilde{\Phi}$ with no all-zero row or column for which $S_c=\ket{s^A}\bra{s^B}$ represents the optimal classical strategy. Let $\Sigma = \text{diag}(\{\bra{i}\tilde{\Phi}\ket{s^B}\braket{s^A}{i}\}_{i=1}^m)$ and $\Lambda=\text{diag}(\{\bra{s^A}\tilde{\Phi}\ket{i}\braket{i}{s^B}\}_{i=1}^m)$.
There is no quantum advantage for $\tilde{\Phi}$ if and only if $\Sigma, \Lambda \succ 0$, or $\Sigma, \Lambda \prec 0$, and 
\begin{equation}
\label{nec-suff-1}
\rho(\Lambda^{-1} \tilde{\Phi}^T \Sigma^{-1} \tilde{\Phi}) = 1,
\end{equation}
where $\rho(.)$ denotes the spectral radius. 
\end{thm}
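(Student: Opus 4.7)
The plan is to deploy semidefinite programming duality. The primal $(P)$ is strictly feasible (take $X = \mathbb{I}$), so Slater's condition yields strong duality and the Lagrangian dual reads
\begin{equation*}
\epsilon_q = \min \ \langle j | \lambda \rangle \quad \text{s.t.} \quad \operatorname{diag}(\lambda) - \tilde{\Phi}_s \succeq 0, \ \lambda \in \mathbb{R}^{2m}.
\end{equation*}
Splitting $\lambda = (\sigma,\tau)$ with $\sigma,\tau \in \mathbb{R}^m$, the constraint becomes a block condition
\begin{equation*}
M(\sigma,\tau) \defeq \begin{pmatrix} \operatorname{diag}(\sigma) & -\tfrac{1}{2}\tilde{\Phi} \\ -\tfrac{1}{2}\tilde{\Phi}^T & \operatorname{diag}(\tau) \end{pmatrix} \succeq 0.
\end{equation*}
The goal is to exhibit a dual-feasible certificate $(\sigma^*,\tau^*)$ whose objective value equals the classical bias $\epsilon_c = \langle s^A | \tilde{\Phi} | s^B \rangle$ precisely when the stated conditions hold; weak duality will then pinch $\epsilon_q = \epsilon_c$.

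Next, I would form the rank-one classical Gram matrix $X_c = w w^T$, where $w = (s^A_1,\dots,s^A_m,s^B_1,\dots,s^B_m)^T \in \{\pm 1\}^{2m}$. By construction $X_c$ is primal feasible with $\operatorname{Tr}[\tilde{\Phi}_s X_c] = \epsilon_c$, so it is primal optimal whenever $\omega_q = \omega_c$. Imposing complementary slackness $M(\sigma,\tau) X_c = 0$ reduces component-wise to $\sigma_i s^A_i = \tfrac{1}{2}\langle i|\tilde{\Phi}|s^B\rangle$ and $\tau_i s^B_i = \tfrac{1}{2}\langle s^A|\tilde{\Phi}|i\rangle$. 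Since $s^A_i, s^B_i \in \{\pm 1\}$ and the no-zero-row/column hypothesis guarantees all the right-hand sides are well-defined and nonvanishing, the \emph{unique} candidate is $\sigma^*_i = \tfrac{1}{2}\Sigma_{ii}$ and $\tau^*_i = \tfrac{1}{2}\Lambda_{ii}$, whose dual objective evaluates to $\tfrac{1}{2}(\operatorname{Tr}\Sigma + \operatorname{Tr}\Lambda) = \epsilon_c$.

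For sufficiency, assuming $\Sigma,\Lambda \succ 0$ the Schur complement reduces $M(\sigma^*,\tau^*) \succeq 0$ to $\Lambda - \tilde{\Phi}^T \Sigma^{-1}\tilde{\Phi} \succeq 0$, which by the similarity $\Lambda^{-1/2}(\cdot)\Lambda^{-1/2}$ is equivalent to $\rho(\Lambda^{-1}\tilde{\Phi}^T \Sigma^{-1}\tilde{\Phi}) \leq 1$. The hypothesis $\rho = 1$ thus supplies a feasible dual point of value $\epsilon_c$, so weak duality forces $\epsilon_q = \epsilon_c$. For necessity, if $\omega_q = \omega_c$ then $X_c$ is primal optimal and strong duality yields a dual optimum annihilating it; complementary slackness identifies this optimum as $(\sigma^*,\tau^*)$; positive semidefiniteness of $M$ forces $\Sigma,\Lambda \succeq 0$ as diagonals of a PSD matrix, promoted to $\succ 0$ via the no-zero-row/column hypothesis; the Schur complement run backwards gives $\rho \leq 1$; and the explicit kernel vector $w$ of $M$ pushes this to $\rho = 1$. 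The $\Sigma,\Lambda \prec 0$ branch is obtained by the sign symmetry $s^A \mapsto -s^A$.

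The main obstacle I anticipate is the boundary analysis: showing that strict definiteness of $\Sigma,\Lambda$ really is forced, and upgrading $\rho \leq 1$ to $\rho = 1$. The former requires tracing how a diagonal zero of $\Sigma$ or $\Lambda$ would propagate to an all-zero row or column of $\tilde{\Phi}$, contradicting the standing hypothesis; the latter is handled cleanly by exhibiting the explicit null vector $w$ of $M$ built from the classical Gram matrix.
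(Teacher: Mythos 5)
Your proposal is correct and follows essentially the same route as the paper: the same dual SDP with strong duality via Slater points, complementary slackness pinning down the unique dual candidate $\bigl(\tfrac{1}{2}\Sigma,\tfrac{1}{2}\Lambda\bigr)$ of value $\epsilon_c$, and a block-PSD/spectral-radius analysis — the only difference being that you prove the block-PSD equivalence explicitly via the Schur complement and force $\rho=1$ with the kernel vector $w$, where the paper outsources this step to Theorem 7.7.7 of Horn and Johnson. One small caveat: the $\Sigma,\Lambda \prec 0$ branch is not really "obtained by the sign symmetry $s^A \mapsto -s^A$" (that flip negates the classical bias and so destroys optimality of $S_c$); that branch is simply vacuous here, since $\tilde{\Phi}_s$ has zero diagonal and dual feasibility $\text{diag}(y) \succeq \tilde{\Phi}_s$ already forces the diagonal entries of $y$ to be non-negative.
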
 
\begin{proof}
The bias can be bounded from above by a feasible solution of the dual program $(D)$ to the semi-definite program given as \cite{Boyd, Cleve2}
\begin{eqnarray}
\label{Dual}
\min \; \; \sum_{i=1}^{2m}y_i \quad
\text{s.t} \quad \text{diag}(y) \succeq \tilde{\Phi}_s.
\end{eqnarray}
Notice that this problem is strongly dual \cite{Boyd}: the identity matrix is an explicit Slater point for $(P)$, and fixing $y$ to be the all-ones vector yields a Slater point for $(D)$. As such, we need to derive the conditions under which the solution to Eq.\ (\ref{Dual}) achieves the classical value, $\bra{s^B}\tilde{\Phi}\ket{s^A}$, which may also be written as $\bra{s^s}\tilde\Phi_s\ket{s^s}$ where $\bra{s^s}=({s^A}^T\;{s^B}^T)$. i.e.\ we require
$$
\Tr\left((\text{diag}(y)-\tilde\Phi_s)\proj{s^s}\right)=0.
$$
By the semi-definite condition of Eq.\ (\ref{Dual}), this means that $\ket{s^s}$ is a 0 eigenvector of $\text{diag}(y)-\tilde\Phi_s$:
\begin{equation}\label{eq:equality}
\text{diag}(y)\ket{s^s}=\tilde\Phi_s\ket{s^s}
\end{equation}
Simply making an element by element comparison, and remembering that $\ket{s^s}$ is a vector of $\pm 1$ entries, we have that whenever a classical strategy achieves the optimal quantum value, there is a unique optimal solution to the dual:
$
\text{diag}(y)=\bigl(\begin{smallmatrix}
\half\Sigma &0\\ 0& \half\Lambda
\end{smallmatrix} \bigr).
$ 
When can this be done? The constraint in Eq. (\ref{Dual}) can be rewritten as
 $\bigl(\begin{smallmatrix}
\Sigma & -\tilde{\Phi}\\ -\tilde{\Phi}^T& \Lambda
\end{smallmatrix} \bigr) \succeq 0$. Theorem 7.7.7 in \cite{Horn} shows that when $\tilde{\Phi}$ has non-zero rows or columns, this is equivalent to $\Sigma, \Lambda \succ 0$ or $\Sigma, \Lambda \prec 0$, and 
\begin{equation}
\rho(\tilde{\Phi}^T \Sigma^{-1} \tilde{\Phi} \Lambda^{-1}) \leq 1.
\end{equation}
Since the solution to the dual giving the classical bound must saturate this inequality (Eq.\ (\ref{eq:equality})), it can be replaced by equality.
\end{proof}

When $S_c = S_c^T$ and $\tilde{\Phi} = \tilde{\Phi}^T$, the condition of Thm.\ \ref{xor-qeqc} reduces to $\pm \Sigma \succ 0$ and $\rho(\Sigma^{-1} \tilde{\Phi}) = 1$.

\begin{corollary} \label{cor:1}
If the vectors corresponding to the maximum singular value of $\tilde{\Phi}$ only contain elements that are $\pm 1$, then there is no quantum advantage for players of the game $\tilde{\Phi}$.
\end{corollary}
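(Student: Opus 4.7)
The plan is to verify the hypotheses of Theorem~\ref{xor-qeqc} directly. Let $\sigma_{\max}$ denote the largest singular value of $\tilde{\Phi}$ and let $\ket{s^A},\ket{s^B}$ be corresponding left/right singular vectors, normalised so that all entries are $\pm 1$ (so $\|s^A\|=\|s^B\|=\sqrt{m}$). The defining relations are then $\tilde{\Phi}\ket{s^B}=\sigma_{\max}\ket{s^A}$ and $\tilde{\Phi}^T\ket{s^A}=\sigma_{\max}\ket{s^B}$. Incidentally, these identities already rule out any all-zero row or column of $\tilde{\Phi}$ (otherwise the associated entry of $\ket{s^A}$ or $\ket{s^B}$ would vanish), so the technical hypothesis of Theorem~\ref{xor-qeqc} is automatic.

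First I would show that $S_c=\ket{s^A}\bra{s^B}$ is an optimal classical strategy. A general rank-one classical strategy with $\pm 1$ vectors achieves bias $\bra{s^A}\tilde{\Phi}\ket{s^B}$; since any $\pm 1$ vector has norm $\sqrt{m}$, the Cauchy--Schwarz/operator-norm bound gives
\[
|\bra{x}\tilde{\Phi}\ket{y}|\leq \|x\|\,\sigma_{\max}\,\|y\| = m\,\sigma_{\max}
\]
for all $\pm 1$ vectors $\ket{x},\ket{y}$. On the other hand, plugging in our singular vectors yields $\bra{s^A}\tilde{\Phi}\ket{s^B}=\sigma_{\max}\braket{s^A}{s^A}=m\,\sigma_{\max}$, saturating the bound.

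Next I would compute the diagonal matrices $\Sigma$ and $\Lambda$. Using $\tilde{\Phi}\ket{s^B}=\sigma_{\max}\ket{s^A}$, the $i$-th diagonal entry of $\Sigma$ is
\[
\bra{i}\tilde{\Phi}\ket{s^B}\braket{s^A}{i}=\sigma_{\max}\,|\braket{s^A}{i}|^{2}=\sigma_{\max},
\]
since $\braket{s^A}{i}=\pm 1$. By the symmetric calculation using $\tilde{\Phi}^T\ket{s^A}=\sigma_{\max}\ket{s^B}$, one gets $\Lambda=\sigma_{\max}\,I$ as well. In particular $\Sigma,\Lambda\succ 0$ (assuming $\sigma_{\max}>0$, which holds whenever $\tilde{\Phi}\neq 0$), satisfying the first half of the condition of Theorem~\ref{xor-qeqc}.

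Finally, the spectral radius condition is immediate:
\[
\Lambda^{-1}\tilde{\Phi}^{T}\Sigma^{-1}\tilde{\Phi}=\frac{1}{\sigma_{\max}^{2}}\tilde{\Phi}^{T}\tilde{\Phi},
\]
whose spectral radius is $\sigma_{\max}^{2}/\sigma_{\max}^{2}=1$. Applying Theorem~\ref{xor-qeqc} then yields the absence of a quantum advantage. The only non-routine step is identifying $\ket{s^A}\bra{s^B}$ as an optimal classical strategy; everything else is a direct substitution into the criterion of Theorem~\ref{xor-qeqc}.
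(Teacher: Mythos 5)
Your proposal is correct and takes essentially the same route as the paper's own proof: identify the optimal classical strategy with the $\pm 1$ singular vectors, compute $\Sigma=\Lambda=\sigma_{\max}\identity$, and observe that $\rho\bigl(\sigma_{\max}^{-2}\tilde{\Phi}^{T}\tilde{\Phi}\bigr)=1$, so Theorem~\ref{xor-qeqc} applies. The only difference is cosmetic but welcome: you explicitly verify two points the paper leaves implicit, namely that $\ket{s^A}\bra{s^B}$ is indeed classically optimal (via the operator-norm bound $|\bra{x}\tilde{\Phi}\ket{y}|\leq m\,\sigma_{\max}$) and that the no-zero-row/column hypothesis of Theorem~\ref{xor-qeqc} is automatic.
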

\begin{proof}
Let the (unnormalised) maximum singular vectors be $\ket{\lambda^A}$ and $\ket{\lambda^B}$ such that $\bra{\lambda^A}\tilde{\Phi}\ket{\lambda^B}=\lambda$ is the maximum singular value. In this case, $S_C=\ket{\lambda^A}\bra{\lambda^B}$. Then $\tilde{\Phi}S_c^T=\lambda\proj{\lambda^A}$ such that $\Sigma=\Lambda=\lambda\identity$. Evidently, these are positive and
$$
\rho(\tilde{\Phi}^T \Sigma^{-1} \tilde{\Phi} \Lambda^{-1})=\frac{1}{\lambda^2}\rho(\tilde{\Phi}^T\tilde{\Phi})=1.
$$
\end{proof} 
This is only a sufficient condition and not necessary as one. For example, the maximum eigenvector of
\begin{equation}	\label{eqn:ex}
\tilde{\Phi}_{ex} = \frac{1}{16} \left(\begin{matrix}
1 & -1 & -1 & 1 \\ -1 & -1 & 1 & -1 \\ -1 & 1 & -1 & -1 \\ 1 & -1 & -1 & 1
\end{matrix} \right)
\end{equation} 
does not consist of $\pm 1$ elements, and yet it can be verified via Thm.\ \ref{xor-qeqc} that $\omega_q(\tilde{\Phi}_{ex}) = \omega_{c}(\tilde{\Phi}_{ex})$.

We now have a simple way to construct games with no quantum advantage. The NLC games are a trivial case since $\tilde{\Phi}$ is diagonal in the Hadamard basis in that instance \cite{NLC}. To construct novel examples, we start with two observations: (i) If $\tilde\Phi^1$ and $\tilde\Phi^2$ are two game matrices satisfying Cor.\ \ref{cor:1}, then it follows that$\tilde\Phi^1\otimes\tilde\Phi^2$ also satsifies Cor.\ \ref{cor:1} (see also \cite{Cleve2}), (ii) If $\tilde\Phi$ is a game matrix satisfying Cor.\ \ref{cor:1}, then so does $U\tilde\Phi V^T$ where $U,V$ are arbitrary orthogonal transformations that map the maximum singular vectors onto other vectors with $\pm 1$ entries. The first observation permits us to extend any examples for small input size to examples with arbitrary size of input.

In the uniform probability case, it suffices to construct any symmetric matrix $\Phi\in\{\pm 1\}^{m\times m}$ for which the total of every row is the same, and at least $\half m$, which ensures that the maximum eigenvector is $\ket{j}$. Since this family of examples does not readily extend to general probability distributions, it may be less interesting for information theoretic purposes. Instead, we address a question of \cite{NLC}: finding {\sc xor} games that differ from NLC, but with no quantum advantage for fixed patterns of input probability distribution. Consider the anti-circulant matrices: for any $m$, $\ket{j}$ is an eigenvector, and if $m$ is even, so is the alternating signs vector. All we have to do is further restrict the matrix elements to guarantee that one of these yields the eigenvalue of maximum modulus. In the case of $m=4$, we specify $\tilde\Phi$ by giving its first row: $(\gamma_0,\gamma_1,\gamma_2,\gamma_3)$ subject to $\sum_i|\gamma_i|=\frac{1}{4}$. Provided
$$
\max\left(\left(\sum_{i=0}^3\gamma_i\right)^2,\left(\sum_{i=0}^3\gamma_i(-1)^i\right)^2\right)\geq (\gamma_0-\gamma_2)^2+(\gamma_1-\gamma_3)^2,
$$
we have a game for which there is no quantum advantage. A sufficient condition for this to happen is $\gamma_0\gamma_2+\gamma_1\gamma_3\geq 0$. Many different patterns for the probability distribution, such as 
$$
\tilde\Phi=\left(\begin{array}{cccc}
p & q & q & -p \\
q & q & -p & p \\
q & -p & p & q \\
-p & p & q & q
\end{array}\right),
$$
where $|p|+|q|=1/8$ and $p,q\in\mathbb{R}$, satisfy the condition, thereby giving game matrices for which there is never a quantum advantage.

{\em Shannon capacity from game graphs:} Game graphs for which $\omega_c=\omega_q$ are good candidates for those that might have a Shannon capacity $\Theta(G)=\alpha(G)$.

\begin{thm}
Every two-party {\sc xor} game with $m$ uniformly chosen inputs for each party, and satisfying Cor.\ \ref{cor:1} has a game graph which is class-1 (has $\Theta(G) = \alpha(G)$).
\end{thm}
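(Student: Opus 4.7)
The plan is to establish the stronger identity $\theta(G)=\alpha(G)$; combined with the Lovász sandwich $\alpha(G)\le\Theta(G)\le\theta(G)$ recalled in the introduction, this immediately yields $\Theta(G)=\alpha(G)$ and hence class-1.

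The easy direction is essentially handed to us by the preceding results. Corollary~\ref{cor:1} and Theorem~\ref{xor-qeqc} together force $\omega_c=\omega_q$, and the Cabello--Severini--Winter identity $m^2\omega_c=\alpha(G)$ combined with the sandwich $m^2\omega_q\le\theta(G)$ then gives $\alpha(G)\le\theta(G)$. In the uniform-distribution case a short direct computation under Cor.~\ref{cor:1} also pins down $\alpha(G)=\tfrac{1}{2}m(m+\lambda_{\max})$, with $\lambda_{\max}$ the maximum singular value of $\Phi$ guaranteed to have $\pm 1$-valued singular vectors.

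The real content is the reverse inequality $\theta(G)\le\alpha(G)$. For this I would produce an explicit dual certificate for the Lovász SDP --- a symmetric matrix $N$ with $N_{ii}=1$, $N_{ij}=1$ for every non-edge $ij$, and $\lambda_{\max}(N)\le\alpha(G)$ --- by lifting the already-optimal dual of the XOR-game SDP (Eq.~(\ref{Dual})). Under Cor.~\ref{cor:1} that dual solution has the particularly simple form $\mathrm{diag}(y)$ with $y$ proportional to the all-ones vector and $\sum_i y_i=\epsilon_q=\lambda_{\max}/m$. The natural lift uses the adjacency-matrix eigenvectors $\ket{\eta_z^\pm}$ derived just before Eq.~(\ref{adj-spec}), which embed the singular-vector data of $\Phi$ directly into the spectrum of $A(G)$ via the identity in Eq.~(\ref{eqn:neat_trick}); on these eigenvectors the $2m$-dimensional dual certificate can be conjugated out to a $2m^2\times 2m^2$ matrix compatible with the edge/non-edge pattern of $G$.

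The main obstacle I expect is this lift itself. The XOR-game SDP lives on $\mathbb{R}^{2m}$ whereas $\theta(G)$ is $2m^2$-dimensional, and the two constraint sets (``$\mathrm{diag}(X)=\ket{j}$'' versus ``$N_{ij}=1$ on non-edges'') do not correspond term-by-term; the crux of the verification is that after lifting, every diagonal and non-edge entry equals $1$. The two pieces of leverage here are (i) the block decomposition $A(G)=H_+\oplus H_-(D)$ obtained by a Hadamard transform on the output label, which is precisely what isolates the $\pm\lambda_z$ singular-value data of $\Phi$ inside $A(G)$, and (ii) the Cor.~\ref{cor:1} guarantee that the maximum singular vector is $\pm 1$-valued, aligning it with the $\pm 1$ output-label structure of the vertices of $G$. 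Should the direct SDP lift prove too rigid, a fallback would be to build an orthonormal representation of $G$ in the Lovász form, assigning to each vertex $(x,y,a)$ a unit vector constructed explicitly from the optimal strategy vectors $s^A,s^B$, and then evaluate the Lovász functional in closed form using the spectrum in Eq.~(\ref{adj-spec}) to read off $\theta(G)\le\alpha(G)$.
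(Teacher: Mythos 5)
Your overall framing coincides with the paper's: prove the stronger identity $\theta(G)=\alpha(G)$ via the eigenvalue characterisation of Theorem~\ref{lovasz-upp-bound}, and let the sandwich $\alpha(G)\leq\Theta(G)\leq\theta(G)$ do the rest. Your preliminary computations are also correct: under Cor.~\ref{cor:1} one has $\alpha(G)=\half m(m+\|\Phi\|)$, and the optimal dual variable of the game SDP (Eq.~(\ref{Dual})) is proportional to the all-ones vector. But the proposal stops exactly where the work begins: no certificate matrix is ever written down, and you flag the construction yourself as ``the main obstacle''. This is not a deferrable verification --- it is the entire content of the theorem. Note also that if your lifting strategy worked as described, i.e.\ if an optimal game dual could generically be converted into a feasible Lov\'asz certificate of equal value, you would have proved $\theta(G)=m^2\omega_q$ for every {\sc xor} game, which is precisely what the paper leaves as a \emph{conjecture} in its conclusions; that is strong evidence the lift is not routine. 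The fallback via orthonormal representations is likewise not worked out.

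The idea you are missing is that no lift from the game SDP is needed: the certificate is built directly from the graph. A symmetric matrix constrained to equal $1$ on the diagonal and on non-edges can differ from the all-ones matrix $J=\proj{j}\otimes\proj{j}\otimes(\identity+X)$ only on edges, and two edge-supported matrices are at hand: $A(G)$ itself, and the perfect matching $\identity\otimes\identity\otimes X$ (the pairs $(x,y,0)$, $(x,y,1)$ are edges). Thus $\tilde A=J+aA(G)+b\,\identity\otimes\identity\otimes X$ is feasible for Theorem~\ref{lovasz-upp-bound} for every choice of $a,b$. All three terms commute ($G$ is regular, so $A(G)$ commutes with $J$; every term of $A(G)$ commutes with $X$ on the output label), so the spectrum of $\tilde A$ is read off from Eq.~(\ref{adj-spec}): for $a<0$ the competing largest eigenvalues are $2m^2+b+a(2m-1)$, $b-a$, and $-b+a(1-m-\|\Phi\|)$. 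Setting $a=-m$ and $b=\alpha(G)-m$ makes all three equal to $\half m(m+\|\Phi\|)$, and this number equals $\alpha(G)$ precisely because Cor.~\ref{cor:1} forces the optimal classical strategy to attain the maximum singular value --- this is the only place the hypothesis enters. Hence $\theta(G)\leq\alpha(G)$ and the theorem follows; as written, your proposal establishes only $\alpha(G)\leq\theta(G)$, which is true for every graph.
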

\begin{proof}
To establish the Shannon capacity, our strategy is to find both $\alpha(G)$ and $\theta(G)$. $\alpha(G)$ is straightforward -- it coincides with the optimal strategy, being specified by the maximum singular vectors. Thus,
$$
\alpha(G)=\half m(m+\|\Phi\|).
$$
In order to compute $\theta(G)$, we use the following characterization of the Lov\'asz theta number derived in \cite{Lovasz-0}
\begin{thm}[\cite{Lovasz-0}]
\label{lovasz-upp-bound}
Let $G$ be a graph on vertices $\{1, \dots, N\}$. Then $\theta(G)$ is the minimum of the largest eigenvalue of any symmetric matrix $(\tilde A_{i,j})_{i,j=1}^{N}$  such that
\begin{equation}
\tilde A_{i,j} = 1, \quad \text{if i = j or if i and j are non-adjacent}.
\end{equation}
\end{thm}
\noindent
Indeed, it is sufficient for our purposes to find any symmetric matrix $\tilde{A}$ whose maximum eigenvalue matches $\alpha(G)$ since $\alpha(G)\leq\theta(G)$. Define the matrix
$$
\tilde A \defeq \proj{j}\otimes\proj{j}\otimes(\identity+X)+aA(G)+b\identity \otimes\identity \otimes X.
$$
It is readily verified that $\tilde A$ satisfies the conditions of Theorem \ref{lovasz-upp-bound}. All of the terms in $\tilde A$ commute with each other, so the diagonalisation is the same as for $A(G)$, and the eigenvalues are readily extracted. It is now our task to select $a,b$ such that the largest eigenvalue is as small as possible. We clearly require $a<0$, in which case there are 3 relevant eigenvalues: $2m^2+b+a(2m-1)$, $b-a$, and $-b+a(1-m-\|\Phi\|)$. The minimum arises when all three are equal: $a=-m$ and $b=\alpha(G)-m$, yielding a maximum eigenvalue of $\alpha(G)$. We conclude that $\alpha(G)=\theta(G)$, and the graph is class-1.
\end{proof}
This proof automatically covers all NLC games, but also includes many other XOR games. A further consequence is that whenever $\omega_q=\half (1+ \frac{1}{m^2}\|\Phi\|)$, we know that $m^2 \omega_q=\theta(G)$. The well-known CHSH game \cite{chsh} is an example of this.

The family of game graphs described here is distinct from previously described families of class-1 graphs: (i) if a graph belongs to the K\"onig-Egerv\'ary (KE) family, it has $\alpha(G) + \nu(G)$ vertices, where $\nu(G)$ denotes the maximum size of a matching. Since the {\sc xor} game graphs have a perfect matching they can only belong to the KE family in the trivial case of $\omega_c = 1$. (ii) For perfect graphs, the classical (non-contextual) polytope coincides with the general consistent polytope \cite{Cabello2, GLS, Cabello}, so these again correspond to a trivial case where there is no advantage not only in quantum theory but also in general no-signalling theories. (iii) For Kneser graphs on $n$ vertices \cite{Lovasz-0}, it is known that
$$
\theta(G) = \frac{- |V| \lambda_{\min}}{\lambda_{\max} - \lambda_{\min}} 
$$
where $\lambda_{\max/\min}$ are the corresponding maximum and minimum eigenvalues of the adjacency matrix. From Eq.(\ref{adj-spec}), we see that this only happens for {\sc xor} games if $\| \Phi\| = m$, where again $\omega_c = 1$.

{\em Conclusions:} We have given a necessary and sufficient condition under which a bipartite {\sc xor} game gives no quantum advantage. This yields broad new families of games for which there is never a quantum advantage, independent of the underlying probability distribution. However, those that we have generated all rely on ensuring that the optimal classical strategy coincides with the vectors of the maximum singular value of the game matrix. It would be particularly interesting from an information theoretic standpoint as to whether there exist any such classes which do not require that coincidence.

Motivated by this broad range of games that can be designed, we then showed that the associated game graphs are all class-1, i.e.\ their Shannon capacity coincides with the independence number of the graph. This result, remarkable simply due to the difficulty in evaluating the Shannon capacity even for very simple graphs, is an entirely classical result derived as a consequence of insight provided from the study of quantum mechanical problems. The proof of this required Cor.\ \ref{cor:1} to hold. However, we conjecture that this restriction can be lifted, and that a necessary and sufficient condition for game graphs to have $\alpha(G)=\theta(G)$ is given by Thm.\ \ref{xor-qeqc} (necessity is trivial since $\omega_q\leq\theta(G)$). For example, it can be verified that this is true for the example in Eq.\ (\ref{eqn:ex}).

{\em Acknowledgements.} This work is supported by the ERC AdG grant QOLAPS and also forms part of the Foundation for Polish Science TEAM project co-financed by the EU European Regional Development Fund. G.M. acknowledges support from the grant IDEAS PLUS and the Brazilian agency Fapemig. We thank S. Severini, D. Roberson and M. Horodecki for useful discussions and A.\ Varvitsiotis for feedback on the manuscript.

\end{document}